\documentclass[british]{article}
\usepackage[T1]{fontenc}
\usepackage{geometry}
\usepackage{textcomp}
\usepackage{amsmath}
\usepackage{amsthm}
\usepackage{amssymb}
\usepackage{graphicx}

\makeatletter

\providecommand{\tabularnewline}{\\}

\theoremstyle{plain}
\newtheorem{thm}{\protect\theoremname}
\theoremstyle{remark}
\newtheorem*{rem*}{\protect\remarkname}
\theoremstyle{plain}
\newtheorem{lem}{\protect\lemmaname}

\usepackage{tikz}
\usepackage{simpler-wick}

\makeatother

\usepackage{babel}
\usepackage[bibstyle=numeric,citestyle=numeric-comp,backend=bibtex,sorting=none]{biblatex}
\providecommand{\lemmaname}{Lemma}
\providecommand{\remarkname}{Remark}
\providecommand{\theoremname}{Theorem}

\begin{document}
\title{The bosonic Hubbard model on a three dimensional flat band lattice}
\author{Leon Haag-Fank and Andreas Mielke\\
 ~\\
 Institut f\"{u}r Theoretische Physik\\
 University of Heidelberg \\
 Philosophenweg 19\\
 D-69121 Heidelberg, Germany}
\maketitle
\begin{abstract}
The lowest eigenstates of the hopping matrix on the line graph of
a cubic lattice with periodic boundary conditions are highly degenerate,
they form a lowest flat band. Further, these states are localized.
If one considers a repulsive bosonic Hubbard model on this lattice
it is possible to construct exact multi-particle ground states simply
by putting particles in the localized single particle ground states
such that they avoid each other. This can be done up to a certain
critical particle number $N_{c}$. We prove that at this particle
number the ground state entropy is subextensive $\propto N_{c}^{2/3}$.
For lower densities the entropy is extensive. We further show that
the problem is related to the number of 4-cycle decompositions of
the cubic lattice with periodic boundary conditions. 
\end{abstract}

\section{Introduction and Results}

\subsection{Introduction}

Correlated fermions or bosons have been studied successfully using
lattice models like the Hubbard model, independently proposed by Hubbard
\cite{hubbard_electron_1963}, Kanamori \cite{kanamori_electron_1963}
and Gutzwiler \cite{gutzwiller_effect_1963} for fermionic systems
and by Gersch and Knollman for bosonic systems \cite{gersch_quantum_1963}.
It describes a lattice whose lattice sites may be occupied by particles.
A particle can move from one lattice point to a neighboring lattice
point. Multiple particles can occupy the same site, but only at an
expense of energy $U$ per pair of particles. 

If the particles are spin-$\tfrac{1}{2}$ fermions, only pairs of
opposite spin may occupy one lattice site, whereas for bosons, there
is no such restriction.

For the fermionic Hubbard model, many rigorous results are known,
for a review see \cite{mielke_hubbard_2015}. For the bosonic case,
less is known. We are especially interested in flat band systems.
There are different ways to obtains lattices with flat bands. In early
studies of flat band systems line graphs are used, see e.g. \cite{mielke_ferromagnetism_1991}.
Other methods to construct flat bands have been proposed like decorated
lattices, see e.g. \cite{tasaki_ferromagnetism_1992,mielke_ferromagnetism_1993}
or in one or two space dimensions more generic methods \cite{maimaiti_compact_2017,maimaiti_universal_2019,maimaiti_flat-band_2021,maimaiti_non-hermitian_2021,tanaka_extension_2020}.
The latter cannot be used in three dimensions. 

For the bosonic Hubbard model in a flat band system it is possible
to obtain rigorous results below a certain critical filling in one
and two dimensions \cite{motruk_bose-hubbard_2012} as well as some
results on pair formation above that critical density, see e.g. \cite{mielke_pair_2018,fronk_localised_2021}.
For the three-dimensional case, no similar results are known.

The aim of the present paper is to partially fill this gap. We consider
the bosonic Hubbard model on a special lattice with a flat band, namely
on the line graph of the cubic lattice. Single particle eigenstates
in the flat band are localized. It is possible to put particles in
these eigenstates such that they avoid each other, i.e. no lattice
site is occupied by two or more particles. This is possible up to
a certain critical density. We investigate the bosonic Hubbard model
on the line graph of the cubic lattice at that critical density and
show also some results below that density.

\subsection{Line graphs}

In order to define our model, we need some notions and results from
graph theory. A graph $G=(V,E)$ is a pair of a set $V(G)$ of sites
or vertices and a set $E(G)$ of edges. Each edge $e\in E(G)$ is
an unordered pair of vertices and may be denoted as $e=\{x,y\}$,
$x,y\in V(G)$. A graph is bipartite, if its vertex set $V$ is the
union of two disjoint sets $V_{1}$ and $V_{2}$ so that each edge
joins a vertex of $V_{1}$ to a vertex of $V_{2}$. The line graph
of $G$ is as well a graph, it is given by $L(G)=(V(L(G)),E(L(G)))$
with $V(L(G))=E(G)$ and $E(L(G))=\{\{e,e'\}\vert e,e'\in E(G),\vert e\cap e'\vert=1\}$.
For a more intuitive understanding, one can imagine the construction
of the line graph from the original graph in the following way: We
draw a vertex on each edge of a representation of $G$ and connect
two vertices by an edge in $L(G)$ if and only if the edges they are
drawn upon have a common vertex in the original graph. 

\subsection{The model\label{subsec:The-model}}

The Hamiltonian of the bosonic Hubbard model on a line graph $L(G)$
is given by 
\begin{equation}
H=\sum_{\{e,e'\}\in E(L(G))}t_{ee'}b_{e'}^{\dagger}b_{e}+\sum_{e\in V(L(G))}U_{e}n_{e}(n_{e}-1)\label{eq:Hubbard}
\end{equation}
where$b_{e}^{\dagger}$ ($b_{e}$) denote the spinless bosonic creation
(annihilation) operators on the vertex $e\in V(L(G))$, $t_{ee'}$
and $U_{e}>0$ are real parameters. $n_{e}=b_{e}^{\dagger}b_{e}$
is the particle number operator on site $e$. Putting two or more
particles on $e$ costs an energy $\propto U_{e}$. We let $G$ be
a finite part of the cubic lattice with periodic boundary conditions.
The simple cubic lattice with periodic boundary conditions is a cartesian
product of three cycles. Let $C_{L}$ be the cycle graph of length
$L$. $G$ is then defined as the Cartesian product $Q^{3}(L_{1},L_{2},L_{3}):=C_{L_{1}}\square C_{L_{2}}\square C_{L_{3}}$
of three cycles of length $L_{1}$, $L_{2}$, and $L_{3}$. $Q^{3}(L_{1},L_{2},L_{3})$
is the three-dimensional torus with side lengths $L_{1}$, $L_{2}$,
and $L_{3}$. The lattice shall be bipartite, therefore the three
cycles forming the cartesian product must have even length. Since
at each vertex $x\in V(G)$ six edges are connected, $L(G)$ is a
graph of connected $K_{6}$. 

Let $B=(b_{xe})_{x\in V(G),e\in E(G)}$ be the incidence matrix of
$G$ with $b_{xe}=1$, if $x\in e$ and $b_{xe}=0$ otherwise. We
let $T=(t_{ee'})_{e,e'\in V(L(G))}=B^{t}B$. For a connected bipartite
graph one can show that $\dim{\rm kern}B=|E|-|V|+1$. In our case,
$|E|=3|V|$. One can show that for each even cycle of $G$, we can
construct an element of the kernel of $B$. With $G$ being bipartite
we may choose the orientation of each edge such that it points from
$V_{1}(G)$ to $V_{2}(G)$. Let $c$ be a cycle on $G$, i.e. a closed
self-avoiding walk. We choose an orientation of c and define $s_{c}=(s_{e,c})_{e\in V(L(G))}$
with
\begin{equation}
s_{e,c}=\begin{cases}
1, & \text{if }e\text{ belongs to }c\\
 & \text{and has\,the\,same\,orientation\,as }c,\\
-1, & \text{if }e\text{ belongs to }c\\
 & \text{and has\,the\,opposite\,orientation\,as }c,\\
0, & \text{otherwise.}
\end{cases}.
\end{equation}
$s_{c}$ vanishes outside $c$ and has is a sequence of alternating
signs on $c$. It is easy to see that $Bs_{c}=0$. The smallest cycles
on the cubic lattice are cycles of length 4, we call them 4-cycles.
Each 4-cycle corresponds to a face in the lattice. There are $|E(G)|=3|V(G)|$
faces, but $\dim{\rm kern}B=|E|-|V|+1=2|V|+1$. Therefore, the states
corresponding to the 4-cycles of $G$ cannot be linearly independent.
This is a main difference to the two-dimensional case. For a bipartite
plane graph, e.g. a finite part of the square lattice, it can be shown
that the states derived from the faces form a basis, they are complete
and linearly independent.

Let $N_{c}$ be the maximum number of edge-disjoint 4-cycles that
can be put on the lattice. We are dealing with edge-disjoint cycles
meaning that we allow those to have a vertex in common. For $G=Q^{3}(L_{1},L_{2},L_{3})$
with all $L_{i}$ even it has been shown \cite{jeevadoss_decomposition_2015,tapadia_cycle_2019}
that a 4-cycle decomposition exists, i.e. a decomposition of the graph
into disjoint cycles of length 4 such that each edge belongs to one
4-cycle. Therefore, $N_{c}=|E(G)|/4$. We consider the Hamiltonian
\eqref{eq:Hubbard} with $N_{c}$ bosons. We will also discuss partially
the case with $N<N_{c}$ bosons.

For $N=N_{c}=|E|/4$, the 4-cycle decompositions span the many-particle
ground states. The reason is that each particle has to sit in a single
particle ground state. Each single particle ground state corresponds
to an even cycle in $G$. If one particle sits on a cycle with more
than 4 edges, then because of $N=|E|/4$ and the fact that the cycles
have to be edge disjoint because of the local repulsive interaction,
another particle has to sit on a cycle with less than 4 edges. Such
a cycle does not exist if all $L_{i}\geq4$.

Let us mention one additional point: Since $T=B^{t}B$, we can write
the Hamiltonian as a sum of local Hamiltonians $H_{x}$. For $N\leq N_{c}$,
the ground states are ground states of each $H_{x}$ separately. Typically,
this property is called frustration free. Therefore, the model we
are treating is frustration free in this sense. 

The first question we wish to address is how many different 4-cycle
decompositions exist for $G=Q^{3}(L_{1},L_{2},L_{3})$. Each 4-cycle
decomposition yields a multi-particle ground state of $H$ with $N_{c}$
bosons since each boson sits in a single particle ground state and
no site on $L(G)$ contains more than one particle. But since the
single particle states formed out of the 4-cycles are not linearly
independent, these multi-particle ground states are eventually not
linearly independent. This leads us to the second question: What is
the ground state degeneracy for $H$ with $N_{c}$ bosons. The third
questions is: What can be said about the ground state degeneracy for
$N<N_{c}$ bosons?

\subsection{Results}
\begin{thm}
\label{4cd_lower_upper}Let $G=Q^{3}(L_{1},L_{2},L_{3})$ with all
$L_{i}$ even , $4\leq L_{1}\leq L_{2}\leq L_{3}$. Then the number
of different 4-cycle decompositions $\Omega_{4}(G)$ on $G$ is bounded
from below by $C_{1}\exp(c_{1}L_{2}L_{3})$ and bounded from above
by $C_{2}\exp(c_{2}L_{2}L_{3})$ with some positive numbers $c_{i},\,C_{i}$.
\end{thm}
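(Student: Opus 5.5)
The proof rests on a rigid local constraint that every 4-cycle decomposition must satisfy. The upper bound then comes from counting tilings of two-dimensional layers. The lower bound needs an explicit family of decompositions with $\propto L_2L_3$ independent binary choices, which I expect to be the hardest step.

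\emph{Local constraint.} Every 4-cycle of $G$ is a face, and a face uses two perpendicular edges at each of its vertices. Fix a vertex $x$ and let $a_{xy},a_{xz},a_{yz}$ be the numbers of chosen faces of each orientation containing $x$. The two $x$-edges at $x$ give $a_{xy}+a_{xz}=2$, and the $y$- and $z$-edges give $a_{xy}+a_{yz}=2$ and $a_{xz}+a_{yz}=2$. Hence $a_{xy}=a_{xz}=a_{yz}=1$ at every vertex. Consequently, inside each coordinate plane (for instance each $xy$-layer, a torus $C_{L_1}\square C_{L_2}$), the chosen faces parallel to that plane are vertex-disjoint and cover every vertex. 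In other words, they form a tiling of that two-dimensional torus by $2\times2$ blocks of vertices.

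\emph{Upper bound.} A decomposition is determined by its three families of faces, so $\Omega_4(G)$ is at most the product, over all coordinate layers, of the number of such block tilings of the layer. A tiling of $C_a\square C_b$ by vertex $2\times2$ blocks must consist of parallel strips of width $2$ in one of the two directions, each strip independently shifted by $0$ or $1$. This is the standard rigidity of square tilings: if two blocks are not aligned in one direction, they force alignment in the other. So the number of tilings is at most $4\cdot2^{\max(a,b)/2}$. Multiplying over the $L_3$ $xy$-layers, the $L_2$ $xz$-layers and the $L_1$ $yz$-layers gives
\[
\Omega_4(G)\le\bigl(4\cdot2^{L_2/2}\bigr)^{L_3}\bigl(4\cdot2^{L_3/2}\bigr)^{L_2}\bigl(4\cdot2^{L_3/2}\bigr)^{L_1}.
\]
Since $4\le L_1\le L_2\le L_3$, the exponent is at most $c_2L_2L_3$, which gives the upper bound.

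\emph{Lower bound.} Here I would exhibit an explicit periodic decomposition (e.g.\ the one of \cite{jeevadoss_decomposition_2015,tapadia_cycle_2019}, or one built from the layer constraint) containing $\sim L_2L_3/16$ pairwise edge-disjoint ``tubes'' that can be modified independently. Each tube is a prism $C_{L_1}\times(2\times2\text{ square})$ running in the longest-cycle-free direction $x$, one for each $2\times2$ cell of a coarse grid in the $(y,z)$ cross-section. Inside a tube, the edges used by faces lying within the tube should be decomposable in two ways, related by shifting all faces along $x$ by one unit, analogous to the strip shift in two dimensions. The faces outside the tubes stay fixed.

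The main obstacle is to choose the base decomposition so that the edge set covered inside each tube is exactly such a shiftable configuration, and so that the shift does not alter any edge shared with neighbouring tubes. I would try first a base decomposition in which the $x$-parallel edges inside each tube are covered only by faces of that tube, and verify the two alternatives directly via the vertex constraint $a_{xy}=a_{xz}=a_{yz}=1$. Independence of the tubes then gives at least $2^{cL_2L_3}$ decompositions, which is the claimed lower bound. If the ``rectangle'' shape of such tubes does not work, I would try tubes of cross-section $4\times4$; this only changes the constant $c_1$.
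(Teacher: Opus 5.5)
\textbf{The gap is in the lower bound.} You name the right mechanism: columns of cubes along the shortest axis whose side faces can be shifted by one unit independently. This is exactly the paper's ``rotation by $\pi/2$'' of a column. However, you never exhibit a decomposition that contains such columns, and you never check that a flip is legal. You also leave open whether $2\times2$ cross-sections work at all. The cited existence results give no such column structure, so the bound $2^{cL_2L_3}$ is unsupported as written.

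The missing ingredient is the period-2 tower decomposition (the paper's Lemma~1), made explicit. All congruences below are mod $2$. Take:
\begin{itemize}
\item the $xy$-faces $\{x,x+1\}\times\{y,y+1\}\times\{z\}$ with $x\equiv y\equiv 0$;
\item the $xz$-faces $\{x,x+1\}\times\{y\}\times\{z,z+1\}$ with $x\equiv1$, $z\equiv0$;
\item the $yz$-faces $\{x\}\times\{y,y+1\}\times\{z,z+1\}$ with $y\equiv z\equiv 1$.
\end{itemize}
Every edge lies in exactly one chosen face. For example, the $x$-edge from $(x,y,z)$ lies in a chosen $xy$-face if $x$ is even and in a chosen $xz$-face if $x$ is odd, and in each case in exactly one of the two candidates. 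The $y$- and $z$-edges work the same way.

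Now fix $y_c\equiv z_c\equiv0$ and take the prism of vertices with $y\in\{y_c,y_c+1\}$, $z\in\{z_c,z_c+1\}$. Its chosen faces are the two opposite $xy$-side faces at even $x$ and the two opposite $xz$-side faces at odd $x$. Together these cover exactly the $8L_1$ edges of the prism: its four $x$-lines and every cross-section square. Translating them by one unit in $x$ (consistent because $L_1$ is even) covers the same edge set. There are $L_2L_3/4$ such prisms, they are pairwise vertex-disjoint, and each can be flipped independently. Hence $\Omega_4(G)\ge 2^{L_2L_3/4}$, as in the paper's Lemma~2.

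\textbf{Your upper bound is correct} and follows the paper's argument: the vertex lemma, then a perfect unit-square tiling of every coordinate layer, then row/column-shifted rigidity, then counting layer by layer. Your linear system giving $a_{xy}=a_{xz}=a_{yz}=1$ is a cleaner proof of the paper's Lemma~3, which the paper only asserts. Taking the product over layers is the form that independence actually justifies; the paper writes a sum.

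One caveat, shared with the paper: for $L_1=4$ the $x$-fibres $C_4$ are 4-cycles that are not faces, so ``every 4-cycle of $G$ is a face'' fails. In that case $|E(G)|=12L_2L_3$, and the trivial bound $\Omega_4(G)\le 2^{\#\{\text{4-cycles}\}}=2^{O(L_2L_3)}$ already suffices.
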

Let $S_{4}=\ln\Omega_{4}(G)$ be the corresponding entropy. Theorem
\ref{4cd_lower_upper} shows that $S_{4}=\Omega(|V(G)|^{2/3})$.
\begin{thm}
\label{gs_Nc}Let $H$ in \eqref{eq:Hubbard} be the bosonic Hubbard
model on $L(G)$ with $G=Q^{3}(L_{1},L_{2},L_{3})$ with all $L_{i}$
even, $4\leq L_{1}\leq L_{2}\leq L_{3}$ and let the particle number
be $N=N_{c}=|E(G)|/4$ . Then the ground state degeneracy is bounded
from below by $C_{1}'\exp(c_{1}'L_{2}L_{3})$ and bounded from above
by $C_{2}'\exp(c_{2}'L_{2}L_{3})$ with some positive numbers $c_{i}',\,C_{i}'$.
\end{thm}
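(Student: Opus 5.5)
The plan has two halves: an upper bound by a dimension count, and a lower bound by exhibiting many linearly independent ground states.

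\emph{Upper bound.} By the discussion in Section~\ref{subsec:The-model}, every ground state with $N_{c}$ bosons is a linear combination of the states
\[
\psi_{D}=\prod_{c\in D}\Big(\sum_{e}s_{e,c}b_{e}^{\dagger}\Big)|0\rangle ,
\]
where $D$ runs over the 4-cycle decompositions of $G$. The ground state space is therefore spanned by at most $\Omega_{4}(G)$ vectors, and its dimension is at most $\Omega_{4}(G)\le C_{2}\exp(c_{2}L_{2}L_{3})$ by Theorem~\ref{4cd_lower_upper}. One can take $C_{2}'=C_{2}$ and $c_{2}'=c_{2}$.

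\emph{Lower bound.} Here it is not enough to count decompositions; I must exhibit an exponentially large \emph{linearly independent} family of states $\psi_{D}$. I would reuse the construction behind the lower bound in Theorem~\ref{4cd_lower_upper}, which presumably comes from local independent choices. A natural guess is a family of about $\kappa L_{2}L_{3}$ disjoint local ``blocks'' (for instance $4$-cycle decompositions of small subcubes or columns), each admitting two different local 4-cycle tilings, while the rest of the decomposition is fixed. This gives $2^{M}$ decompositions $D_{\sigma}$, $\sigma\in\{0,1\}^{M}$, with $M\propto L_{2}L_{3}$.

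To prove independence of the $\psi_{D_{\sigma}}$, I expand them in the occupation number basis of $L(G)$. Each $\psi_{D}$ is a sum over choices of one edge per cycle of $D$, with coefficient $\pm1$ because all occupations are $0$ or $1$. The idea is to find, for each $\sigma$, a witness configuration $n_{\sigma}$ (one chosen edge per cycle) that appears in $\psi_{D_{\sigma}}$ but in no $\psi_{D_{\tau}}$ with $\tau$ preceding $\sigma$ in some order. This makes the matrix $\langle n_{\sigma}|\psi_{D_{\tau}}\rangle$ triangular with nonzero diagonal.

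Because the blocks are disjoint and the states factorize over blocks, the problem reduces to a single block. There I need the two local states $\phi_{0}$ and $\phi_{1}$, both built on the same edge set of the block, to be linearly independent. Tensor products of independent pairs over disjoint blocks are then independent, giving $2^{M}$ independent ground states and $C_{1}'\exp(c_{1}'L_{2}L_{3})$ with $c_{1}'=\kappa\ln2$.

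\emph{Main obstacle.} The hard step is the block-level independence. The single-particle 4-cycle states are linearly dependent; for example, the six faces of a cube sum to zero with suitable signs. So two different tilings of a block could conceivably give proportional or even identical many-body states. I would first test this on the smallest block that admits two distinct tilings, such as a $2\times2\times1$ slab of unit cubes or a cube neighbourhood with its boundary edges fixed. For that block I would find an occupation configuration that is compatible with tiling $0$ but not with tiling $1$: a set of edges that hits every 4-cycle of tiling $0$ exactly once but puts two edges on some cycle of tiling $1$. Such a configuration has nonzero amplitude in $\phi_{0}$ and zero amplitude in $\phi_{1}$, and that suffices for independence.

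If the blocks from Theorem~\ref{4cd_lower_upper} do not admit such a witness, I would pick a different local move, for example swapping the tiling of a $2\times2$ face-layer between two orientations, so that the witness exists. I would then accept a possibly smaller constant $\kappa$ for the density of blocks.
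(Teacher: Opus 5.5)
Your proposal is correct, and the upper bound is exactly the paper's. For the lower bound, commit to the blocks the paper uses. These are the $L_2L_3/4$ columns along the $L_1$-direction in the tower decomposition (Lemma~\ref{4cd_lower_bound}). They are pairwise edge-disjoint, and each admits the two rotations of Fig.~\ref{fig:The-two-rotations}.

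Where you differ from the paper is in how independence is proved. The paper (Lemma~\ref{lemma:linear-independence-of}) computes the $2\times2$ Gram matrix by Wick's theorem. It gets $\langle\uparrow|\uparrow\rangle=\langle\downarrow|\downarrow\rangle=4^{2L_1}$ and shows $|\langle\uparrow|\downarrow\rangle|<4^{2L_1}$ by counting non-vanishing full contractions. Lemma~\ref{mpg_lower_bound} then passes to all columns in one sentence. You instead work in the occupation basis with a witness configuration, plus an explicit tensor-product factorization over disjoint edge sets.

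The two arguments rest on the same fact. The paper's non-vanishing full contractions are in bijection with edge sets that are transversals of both tilings. So its strict bound says exactly that some transversal of one tiling is not a transversal of the other, i.e.\ your witness. Your version avoids the sign bookkeeping of the contraction count. It also spells out the step from one column to all columns, which the paper leaves implicit.

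The step you flag as the main obstacle is in fact immediate, so no fallback local move is needed. Suppose $D_0\neq D_1$ tile the same edge set, and pick $f\in D_1\setminus D_0$. Its four edges cannot all lie in a single face of $D_0$, since that face would then equal $f$. So two distinct faces $g,g'\in D_0$ contain edges $e,e'$ of $f$. Choose $e$ for $g$, $e'$ for $g'$, and arbitrary edges for the remaining faces of $D_0$. This configuration has amplitude $\pm1$ in $\psi_{D_0}$ and amplitude $0$ in $\psi_{D_1}$, because it hits $f$ twice. Since $\psi_{D_1}\neq0$, the two states are not proportional.

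For the paper's column, this just means choosing, for $a_1$ and $c_1$, the two axis-parallel edges they share with $b_1$. Your tensor-product argument then yields $2^{L_2L_3/4}$ linearly independent ground states, i.e.\ $c_1'=\tfrac14\ln 2$.
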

Theorem \ref{gs_Nc} shows that the ground state entropy in this case
is $S_{0}(N_{c})\geq c(|V(G)|^{2/3})$ for some finite constant $c$. 
\begin{rem*}
The upper bound in Theorem 2 follows directly from the upper bound
in Theorem 1 since each 4-cycle decomposition of $G$ yields a multi-particle
ground state of the Hamiltonian on $L(G)$. To obtain a ground state
with $N_{c}$ bosons, all particles must sit on a 4-cycle and any
two 4-cycles must have no edge in common to avoid a repulsion on that
edge. Due to the problem that the single particle ground states formed
using the 4-cycles are not linearly independent, the lower bound in
Theorem 1 does not directly yield a lower bound in Theorem 2.

An entropy $\propto N^{\alpha}$ with $0<\alpha<1$, i.e. a subextensive
entropy typically occurs in systems with some kind of frustration.
Since we are dealing with spinless bosons, it is a priori not clear
why there should be some kind of frustration in this system. To our
knowledge, there is no system of interacting or free lattice bosons
with a subextensive entropy, only some supersymmetric lattice models
with a subextensive entropy are known \cite{huijse_quantum_2009,huijse_supersymmetry_2010}.
We will comment on this point later after the proof of Theorem \ref{4cd_lower_upper}. 
\end{rem*}
Let $\rho=N/|E(G)|$ be the particle density in a sufficiently large
system. Eventually we may take the thermodynamic limit. Then, for
$\rho<\rho_{c}=1/4$, the entropy per lattice site is bounded from
below by $h_{b}(4\rho)/4$ with $h_{b}(\rho)=-\rho\ln\rho-(1-\rho)\ln(1-\rho)$.
This is easy to see. We start with one of the 4-cycle decompositions.
It correponds to a multi-particle state at the density 1/4. Now we
take some bosons out so that $\rho<\rho_{c}.$There are $\binom{N_{c}}{N}$
possibilities to do that which yields the corresponding entropy per
site. The bound is not optimal since we start from one of the many
possible ground states at $\rho_{c}=1/4$. 

\section{Proof of Theorem \ref{4cd_lower_upper}}

We start by looking at the 4-cycle decompositions of $G=Q^{3}(L_{1},L_{2},L_{3})$.
Here and in what follows we always assume that all $L_{i}$ even,
$4\leq L_{1}\leq L_{2}\leq L_{3}$. 
\begin{lem}
\label{Lemma1}$G$ has a 4-cycle decomposition.
\end{lem}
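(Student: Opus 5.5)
The plan is to write down an explicit 4-cycle decomposition. It uses only elementary faces of the lattice, and it needs nothing beyond the evenness of $L_1,L_2,L_3$. This also recovers the result of \cite{jeevadoss_decomposition_2015,tapadia_cycle_2019} in the special case needed here.

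\textbf{Setup.} We label the vertices of $G$ by $(x,y,z)\in\mathbb{Z}_{L_1}\times\mathbb{Z}_{L_2}\times\mathbb{Z}_{L_3}$. An $x$-edge based at $v=(x,y,z)$ joins $v$ and $(x+1,y,z)$; $y$-edges and $z$-edges are defined in the same way. An $xy$-face based at $v$ is the 4-cycle through $v$, $v+e_x$, $v+e_x+e_y$, $v+e_y$. It consists of the $x$-edges based at $v$ and $v+e_y$, and the $y$-edges based at $v$ and $v+e_x$. The $yz$- and $zx$-faces are defined analogously. Because all $L_i$ are even, the parity of each coordinate is well defined on the torus, and this is the only place the hypothesis enters.

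\textbf{The decomposition.} We take the following three families of faces:
\begin{itemize}
\item all $xy$-faces based at $(x,y,z)$ with $x$ and $y$ even;
\item all $yz$-faces based at $(x,y,z)$ with $y$ odd and $z$ even;
\item all $zx$-faces based at $(x,y,z)$ with $z$ odd and $x$ odd.
\end{itemize}

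\textbf{Each edge is covered exactly once.} Consider an $x$-edge based at $(x,y,z)$.
\begin{itemize}
\item It can only lie in $xy$-faces based at $(x,y,z)$ or $(x,y-1,z)$, or in $zx$-faces based at $(x,y,z)$ or $(x,y,z-1)$. In every case the base point has the same $x$-coordinate as the edge.
\item If $x$ is even, no chosen $zx$-face qualifies. Exactly one of $y$ and $y-1$ is even, so exactly one chosen $xy$-face contains the edge.
\item If $x$ is odd, no chosen $xy$-face qualifies. Exactly one of $z$ and $z-1$ is odd, so exactly one chosen $zx$-face contains the edge.
\end{itemize}
The same argument, with the roles cyclically permuted, handles the other directions. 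The $y$-edges with $y$ even are covered by the $xy$-family, and those with $y$ odd by the $yz$-family. The $z$-edges with $z$ even are covered by the $yz$-family, and those with $z$ odd by the $zx$-family.

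\textbf{Consistency check.} Each family contains $|V|/4$ faces, so there are $3|V|/4=|E|/4$ faces in total, as a decomposition must have. Distinct chosen faces are edge-disjoint by the uniqueness just shown. They are genuine 4-cycles because $L_i\ge4$, so no face degenerates.

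\textbf{Main obstacle.} The only delicate point is finding the right parity rule. The naive choice, taking checkerboard $xy$-faces in every layer, leaves the $z$-edges uncovered. The rule above avoids this by assigning each direction's edges to the two face types according to the parity of that same coordinate, so the three families interlock without overlap. Once the rule is fixed, the verification is the routine parity check above.
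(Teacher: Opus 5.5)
Your proposal is correct and is essentially the paper's own construction. Your three parity-defined families are exactly the paper's period-two ``interwoven towers'': for example, the $xy$-faces with $x,y$ even, stacked along the whole $z$-direction. In between, the odd--odd columns carry walls alternating in $z$, which is the pattern the paper later rotates in Lemma~\ref{4cd_lower_bound}. Your edge-by-edge parity check supplies the verification that the paper only asserts as evident from its figure.
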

\begin{proof}
The simplest 4-cycle decomposition comes in the form of three sets
of ``interwoven towers'', one set for each dimension. A tower consists
of parallel faces stacked exactly over another, spanning the whole
length in this direction. Such a tower decomposition, as we will call
them, is illustrated in Fig. \ref{fig:tower}. The towers have a periodicity
of two cubes in all directions. It is evident, that tower decompositions
are in fact valid 4-cycle decompositions in lattices of even sizes. 
\end{proof}
\begin{center}
\begin{figure}[bh]
\begin{tabular}{c|c|c}
\includegraphics[width=0.3\linewidth]{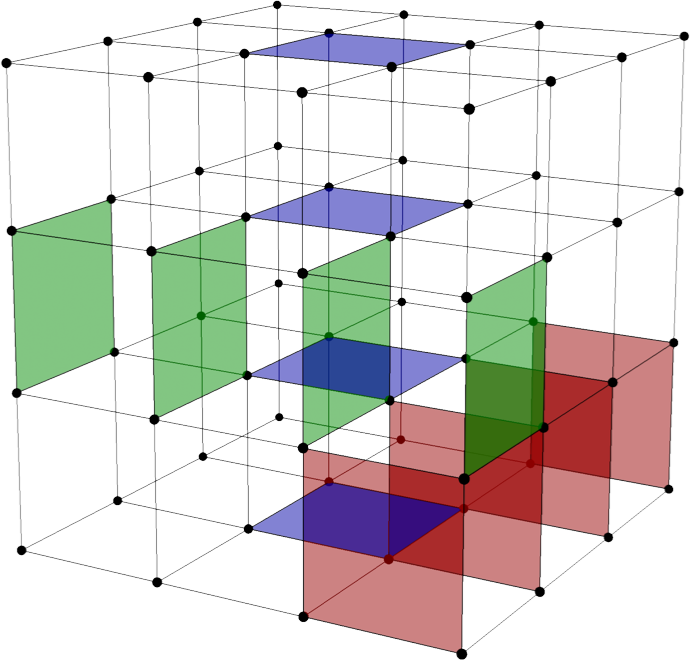} & \includegraphics[width=0.3\linewidth]{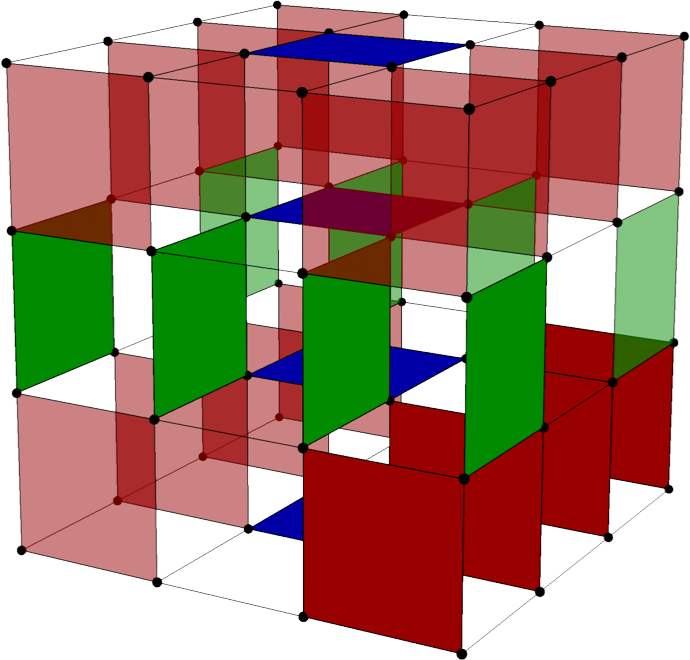} & \includegraphics[width=0.33\linewidth]{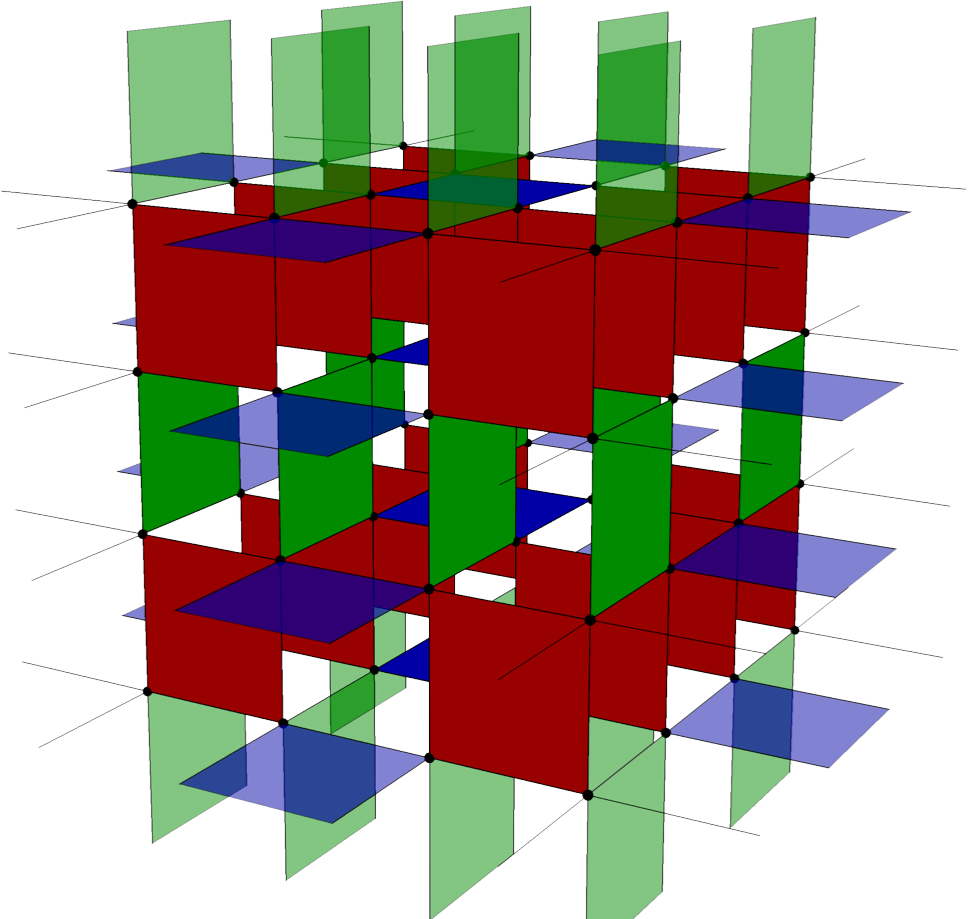}\tabularnewline
\begin{minipage}[t]{0.3\columnwidth}%
One tower for each color%
\end{minipage} & %
\begin{minipage}[t]{0.3\columnwidth}%
All additional towers, that fit in the sub-lattice%
\end{minipage} & %
\begin{minipage}[t]{0.3\columnwidth}%
All towers required to occupy all edges in the sub-lattice%
\end{minipage}\tabularnewline
\end{tabular}\caption{Visualization of a tower decomposition in a 3x3x3 sub-lattice of a
cubic lattice. The three sets of towers are colored in red ($x$-direction),
green ($y$-direction) and blue ($z$-direction).}
\label{fig:tower}
\end{figure}
\par\end{center}

See also \cite{jeevadoss_decomposition_2015,tapadia_cycle_2019} for
different proofs.
\begin{lem}
\label{4cd_lower_bound}The number of different 4-cycle decompositions
of $G$ can be bounded from below by $2^{L_{1}L_{2}/4}+2^{L_{1}L_{3}/4}+2^{L_{2}L_{3}/4}$.
\end{lem}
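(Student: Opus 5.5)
The plan is to start from the tower decomposition $D_0$ of Lemma \ref{Lemma1} and find many independent local ``flips'' that turn it into other 4-cycle decompositions. The basic object is a square column of cubes running in one coordinate direction, say $z$, through the whole torus. Its lateral surface is a cylinder: a $4\times L_{3}$ grid of faces with periodic identification in both directions. Because $4$ and $L_{3}$ are both even, these faces admit a checkerboard $2$-colouring. Every edge of the cylinder lies in exactly two of its faces, and those two faces have opposite colours. Hence each colour class is, on its own, an edge decomposition of the cylinder into 4-cycles, and the two classes are disjoint. Replacing one colour class by the other leaves every edge outside the cylinder untouched. So if a 4-cycle decomposition contains one colour class of such a cylinder, swapping it for the other class gives a new 4-cycle decomposition.

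The key step is to show that $D_0$ contains one colour class of $L_{1}L_{2}/4$ such $z$-cylinders whose edge sets are pairwise disjoint. Given the period-two structure of $D_0$, I expect one column per $2\times2$ block of the $xy$-cross-section to work. Its horizontal edges are covered by the horizontal faces of a $z$-tower. Its vertical edges are covered by vertical faces from $x$- and $y$-towers, which alternate around the column with the level. I would check this directly from the explicit coordinates of the tower decomposition. In coordinates, the faces of $D_0$ covering the edges of the chosen column should be exactly one checkerboard class of that column's lateral surface. Distinct chosen columns are separated by at least one lattice unit in $x$ or $y$, so their lateral surfaces share no edges. This verification is the main obstacle, because it depends on the precise placement of the towers, which is only described pictorially. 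If the naive choice of faces fails, I would instead use the column's edge set together with the cubes' horizontal faces, and look for a $2\times2$-periodic choice of columns where the colour-class property holds.

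Granting this, the flips on different columns act on disjoint edge sets and commute. So the $2^{L_{1}L_{2}/4}$ choices of which columns to flip give pairwise distinct 4-cycle decompositions, since they differ in which faces they contain. By the cubic symmetry of $D_0$, the same construction with columns in the $x$- and $y$-directions gives families of size $2^{L_{2}L_{3}/4}$ and $2^{L_{1}L_{3}/4}$. To get the sum rather than just the maximum, I would argue that the three families overlap only in $D_0$ itself. Any non-trivial flip in direction $z$ puts vertical faces on the sides of a $z$-column in an arrangement that cannot occur in a decomposition obtained from $D_0$ by flips only in direction $x$ or only in direction $y$. If this overlap turns out to be slightly larger, it only costs an additive constant, and the bound is adjusted accordingly; the maximum $2^{L_{2}L_{3}/4}$ is all that Theorem \ref{4cd_lower_upper} needs.
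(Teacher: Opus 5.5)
Your route is the paper's. Your swap of checkerboard classes on the $4\times L$ lateral torus of a column is exactly the ``rotation by $\pi/2$'' of Fig.~\ref{fig:The-two-rotations}: the states $|\uparrow\rangle,|\downarrow\rangle$ of Lemma~\ref{lemma:linear-independence-of}, with $a_ic_i$ and $b_id_i$ alternating in $i$, are the two colour classes. Your argument that a swap preserves the decomposition property is cleaner than the paper's appeal to the figure.

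The gap is the step you flagged yourself, and there your heuristic picks the wrong column. If one colour class of a column's lateral surface lies in $D_0$, it covers \emph{every} edge of that surface, including the horizontal rim edges at each level. Those edges therefore cannot be covered by horizontal faces. Conversely, take a column whose horizontal edges are covered by the faces of a $z$-tower: every side face contains such an edge, so no side face is in $D_0$ and there is nothing to flip. The flippable columns are the ones \emph{between} the towers, as the paper says.

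Concretely, let $D_0$ consist of three face families:
\begin{itemize}
\item the $xy$-faces $[x,x+1]\times[y,y+1]\times\{z\}$ with $x,y$ even;
\item the $xz$-faces $[x,x+1]\times\{y\}\times[z,z+1]$ with $x$ odd and $z$ even;
\item the $yz$-faces $\{x\}\times[y,y+1]\times[z,z+1]$ with $y$ odd and $z$ odd.
\end{itemize}
The column $[x,x+1]\times[y,y+1]\times\mathbb{Z}_{L_3}$ with $x,y$ odd has both $xz$-sides in $D_0$ at even levels and both $yz$-sides at odd levels. That is one checkerboard class. These $L_1L_2/4$ columns are even vertex-disjoint. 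The $x$-columns with $y,z$ even and the $y$-columns with $x$ even, $z$ odd behave the same way.

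On the sum, the paper simply adds the three counts and never mentions that $D_0$ lies in all three families. Your caution is justified, and your overlap claim is true. Each family fixes one face type of $D_0$: $z$-flips leave every $xy$-face in place, $x$-flips every $yz$-face, and $y$-flips every $xz$-face. A nontrivial flip in one direction changes both of the other two face types. Hence any two families meet only in $D_0$, and their union has exactly $2^{L_1L_2/4}+2^{L_1L_3/4}+2^{L_2L_3/4}-2$ elements. That is two short of the stated bound, so as written you prove a weaker statement.

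You do not need to give up the constant. The translates of $D_0$ by $(1,1,0)$ and by $(1,0,1)$ change all three face types, so they lie in none of the families. Adding them to the union gives the lemma exactly as stated.
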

\begin{proof}
Consider the tower decomposition in Lemma \ref{Lemma1}. It can be
described as towers with faces parallel to the $x-y$-plane and between
these columns with faces perpendicular to the $xy$-plane. If one
rotates one of these columns by $\pi/2$ as depicted in Fig. \ref{fig:The-two-rotations},
one obtains another distinct 4-cycle decomposition. Since all columns
can be rotated independently, one obtains $2^{L_{1}L_{2}/4}$ possible
4-cycle decompositions. This can also be done with columns of faces
perpendicular to the $xz$-plane and with columns of faces perpendicular
to the $yz$-plane. This yields the desired lower bound. 
\end{proof}
\begin{center}
\begin{figure}[bh]
\begin{centering}
\includegraphics[width=0.2\linewidth]{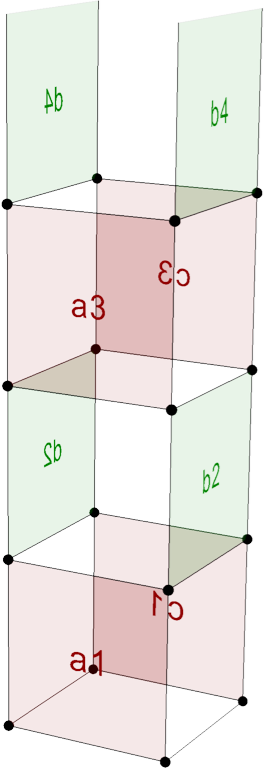} \hspace{0.05\linewidth}
\includegraphics[width=0.2\linewidth]{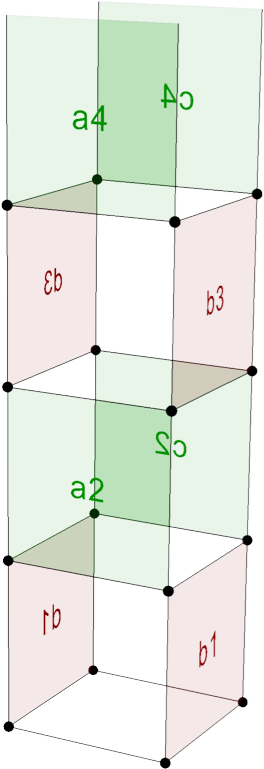} 
\par\end{centering}
\caption{\label{fig:The-two-rotations}The two rotations of a column of a tower
decomposition. The faces are labelled $a_{i},b_{i},c_{i},d_{i}$.
For instance the column in the front right corner in Fig. \ref{fig:tower}
can be rotated in this way around the z -axis by 90\textdegree , yielding
a new, but distinct 4-cycle decomposition.}

\end{figure}
\par\end{center}

Since it is also possible to rotate columns in different directions
as long as these columns do not touch each other, there are more 4-cycle
decompositions of $G$. The lower bound is not optimal.

To obtain an upper bound for the number of 4-cycle decompositions
of $G$, we first prove some general properties of 4-cycle decompositions.
\begin{lem}
\label{Lemma3}In a 4-cycle decomposition of $G$ every vertex is
contained in exactly three 4-cycles, of which one is oriented in $x$-,
one in $y$- and one in $z$-direction, as seen in Fig \ref{fig:Left:-The-center}.
In particular, diagonal arrangements of parallel 4-cycles are not
possible. 
\end{lem}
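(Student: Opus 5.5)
The plan is a purely local argument at a single vertex $x\in V(G)$. I would first fix what ``4-cycle'' means here. Following the discussion in Subsection~\ref{subsec:The-model}, I take the 4-cycles of the decomposition to be faces of the cubic lattice, i.e.\ elementary plaquettes spanned by two different coordinate directions.

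This needs a word of care when some $L_i=4$. Then the cycle $C_{L_i}$ itself is a straight non-contractible 4-cycle along one axis. Such cycles must be excluded by this convention, since otherwise the statement fails: for instance, all $x$-edges could be covered by straight cycles. For $L_i\ge 6$ no such cycle exists, and every 4-cycle is automatically a face. In either case each 4-cycle through $x$ uses exactly two edges at $x$, and these two edges point in perpendicular coordinate directions. They are never a pair $\pm$ along the same axis.

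Next comes the counting step. The vertex $x$ has degree $6$, with incident edges in the directions $+x,-x,+y,-y,+z,-z$ (abusing notation slightly for the axes). In a 4-cycle decomposition every edge lies in exactly one 4-cycle, and each 4-cycle through $x$ covers exactly two of these edges. Hence exactly $6/2=3$ cycles of the decomposition contain $x$. Their edge pairs at $x$ form a perfect matching of the six directions into three pairs, with no pair of opposite directions.

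The key step is to show that the three pairs lie in three different coordinate planes. I would argue by contradiction. Suppose two of the pairs span the same plane, say the $xy$-plane. Between them they use all four of $\pm x,\pm y$. The remaining pair is then forced to be $\{+z,-z\}$, which is an opposite pair and so not allowed. Therefore the three faces at $x$ have normals in three different directions, one ``oriented'' in each of the $x$-, $y$- and $z$-directions, as claimed.

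The statement about diagonal arrangements then follows at once. Two parallel faces touching diagonally would share exactly one vertex. At that vertex there would be two faces of the same orientation, which contradicts what was just shown. Parallel faces sharing an edge are already excluded by edge-disjointness.

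I expect the only real obstacle to be the preliminary point that every 4-cycle at $x$ uses two perpendicular edges. This is where the hypothesis $L_i\ge 4$, together with the face convention when $L_i=4$, enters. After that, the argument is a finite check of the perfect matchings of six directions.
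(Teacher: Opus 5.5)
Your proof is correct and takes the same route as the paper, which only says one should try the possible ways of covering the six edges at a vertex by 4-cycles. Your perfect-matching argument (no opposite pair is allowed, so two pairs in one plane would force the pair $\{+z,-z\}$) makes that case check explicit, and your handling of the diagonal statement matches the paper's figure; your caveat about $L_i=4$ is also well taken, since straight non-contractible 4-cycles genuinely violate the statement there, and the paper excludes them only implicitly by identifying 4-cycles with faces.
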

\begin{center}
\begin{figure}[th]
\begin{centering}
\includegraphics[width=0.3\linewidth]{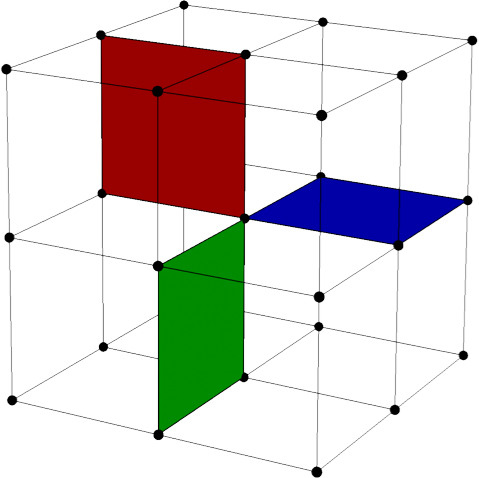}\qquad{}\includegraphics[width=0.3\linewidth]{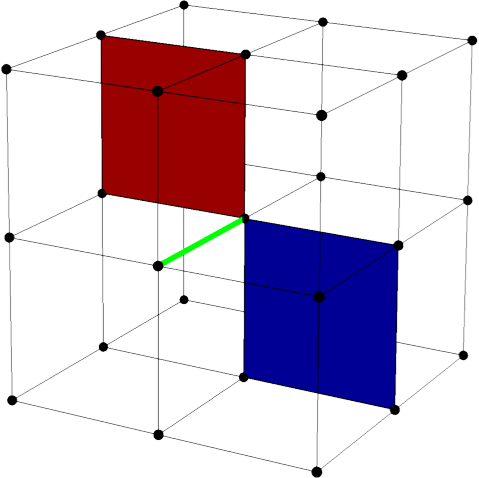}
\par\end{centering}
\caption{\label{fig:Left:-The-center}Left: The center vertex must have exactly
one face per direction attached to it.\protect \\
Right: Two faces diagonally are impossible, since the green edge cannot
be covered.}
\end{figure}
\par\end{center}
\begin{proof}
This is easy to see, by trying out the different possibilities to
cover the 6 edges incident to the vertex with 4-cycles.
\end{proof}
Consider now an arbitrary single plane of squares in the cubic lattice.
Since according to Lemma \ref{Lemma3} two 4-cycles in the plane must
not have a vertex in common, we arrive at a configuration of 4-cycles
in the plane which are hard core with a Moore neighborhood, i.e. a
neighborhood where all eight neighboring 4-cycles on that plane are
forbidden to be occupied. On the other hand, since we aim at a situation
where exactly 1/4 of all 4-cycles are occupied, each vertex in the
plane must exactly belong to one 4-cycle. This means that the 4-cycles
in one plane form a dense packing with a Moore neighborhood. It has
been shown \cite{eloranta_dense_2008}:
\begin{lem}
\label{Lemma:All-4-cycle-2d-configurations}All 4-cycle configurations
of the $xy$-plane with periodic boundary conditions where each vertex
belongs to exactly one 4-cycle and two 4-cycles have no vertex in
common can be constructed from one of the 4 regular configurations
by either shifting an arbitrary number of columns in $x$-direction
or an arbitrary number of columns in $y$-direction by one square,
but not both.
\end{lem}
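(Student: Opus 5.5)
We have to prove the classification of perfect packings of $2\times2$ blocks on an even torus $C_{L_1}\square C_{L_2}$. In such a packing each vertex lies in exactly one occupied square, and two occupied squares share no vertex. Equivalently, the plane is tiled by non-overlapping $2\times 2$ vertex blocks. I would identify each occupied face with its lower-left vertex $(i,j)$, taken modulo $(L_1,L_2)$. The condition then says that the set $P$ of lower-left corners satisfies two things: $P+\{0,1\}^2$ covers every vertex exactly once, and distinct elements of $P$ differ by a vector outside $\{-1,0,1\}^2$.

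The key local step examines two adjacent blocks.

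Take a block with corner $(i,j)\in P$ and look at the vertex $(i+2,j)$, which must be covered.
\begin{itemize}
\item Its covering block has corner $(i+2,j-k)$ for $k\in\{0,1\}$, since corners with first coordinate $i+1$ would overlap the first block.
\item If $k=0$, the blocks are aligned horizontally. If $k=1$, the second block is shifted down by one.
\end{itemize}
So along the $x$-direction the blocks form "columns" of horizontal neighbours, each consecutive pair offset vertically by $0$ or $1$, and likewise in the $y$-direction.

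The heart of the argument is a dichotomy lemma: if somewhere two horizontally adjacent blocks are misaligned (offset $1$ in $y$), then every pair of vertically adjacent blocks in the whole torus is aligned. The proof looks at the $4\times 4$ neighbourhood of a misaligned pair. The vertex directly above the junction can only be covered by a block that forces the blocks above to repeat the same misalignment. Propagating this shows that the strip of width $2$ in $x$ containing the misaligned boundary consists of vertically stacked, aligned blocks. Consequently the tiling decomposes into vertical strips (columns) of width $2$, each a straight stack of blocks, with each column independently shifted by $0$ or $1$ in $y$ relative to a reference. Symmetrically, if no horizontal misalignment exists, the tiling consists of horizontal rows of width $2$, each independently shifted in $x$. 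The case where neither misalignment occurs is the perfectly regular lattice, which is included in both descriptions.

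It remains to count the base configurations. The regular configurations are $P=(a,b)+2\mathbb{Z}^2$ with $a,b\in\{0,1\}$. Periodicity with even $L_1,L_2$ makes these well-defined and gives exactly $4$ of them. Given the column structure, shifting columns recovers any configuration from one of these four by choosing the parities of the column positions and of a reference column. Evenness of $L_1$ guarantees that the columns close up consistently around the torus, since the number of width-$2$ columns is $L_1/2$. "Not both" follows from the dichotomy lemma.

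I expect the main obstacle to be the propagation step in the dichotomy lemma. One must argue carefully on the torus that a single misalignment forces an entire straight column, rather than only a local one. I would do this by induction along the $y$-direction, using the fact that the vertex above the junction has only one admissible covering block once the two lower blocks are fixed. Periodicity then closes the induction around the cycle $C_{L_2}$.
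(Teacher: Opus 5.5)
Your proof is correct, and it takes a different route from the paper, which does not prove this lemma at all but quotes it from the literature on dense packings with Moore neighbourhood \cite{eloranta_dense_2008}.

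\textbf{Your route versus the citation.} Your argument is self-contained local forcing:
\begin{itemize}
\item The right neighbour of a block with corner $(i,j)$ must have corner $(i+2,j)$ or $(i+2,j-1)$.
\item A misaligned pair forces the same misalignment one level higher. The block above the right block is forced directly. The block above the left block is then forced because the alternative corner, one step to the left, leads to an overlap at the next vertex; you should write out this second placement.
\item The induction closes around $C_{L_2}$ because $L_2$ is even.
\end{itemize}
The citation buys brevity. Your route buys a checkable proof that shows where evenness enters. It also gives, as a by-product, the exact count $2^{L_1/2+1}+2^{L_2/2+1}-4$. That is two column parities times $2^{L_1/2}$ offsets, the same for rows, minus the four regular configurations counted twice. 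This is sharper than the $4(2^{L_1/2}+2^{L_2/2})$ fed into Lemma~\ref{4cd_upper_bound}.

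\textbf{The step to make explicit.} After the vertical forcing you only know that the two strips on either side of the misaligned boundary are straight columns. Your word ``consequently'' hides the horizontal propagation, which takes one line. Suppose $\{c-2,c-1\}\times\mathbb{Z}_{L_2}$ is exactly covered by a straight column. Then every vertex $(c,y)$ must be covered by a block whose corner has $x$-coordinate $c$, since a corner at $c-1$ would overlap the column. So $\{c,c+1\}\times\mathbb{Z}_{L_2}$ is exactly tiled by blocks lying inside it, and is therefore a straight column too. Iterate this $L_1/2$ times.

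\textbf{The no-misalignment case.} This case is not literally ``symmetric'' to your dichotomy, but it is direct. Chains of aligned right neighbours close up into full rows because $L_1$ is even. The strip above a full row is then forced, by the same argument, to be tiled by blocks lying inside it.

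Finally, lifting to a periodic tiling of $\mathbb{Z}^2$ removes any worry about wrap-around in your local $4\times 4$ arguments when some $L_i=4$.
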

This means that for one $xy$ plane we obtain 4($2^{L_{1}/2}+2^{L_{2}/2})$
possible configurations. We have $L_{3}$ planes in $xy$ direction,
thus {[}4$(2^{L/2_{1}}+2^{L_{2}/2})]^{L_{3}}$ possible configurations
if we assume that they can be filled independently. Assuming that
in the other two directions we also independently fill the planes,
we obtain
\begin{lem}
\label{4cd_upper_bound}The number of 4-cycle decompositions of $G$
is bounded form above by $[4(2^{L_{1}/2}+2^{L_{2}/2})]^{L_{3}}+4[(2^{L_{1}/2}+2^{L_{3}/2})]^{L_{2}}+[4(2^{L_{2}/2}+2^{L_{3}/2})]^{L_{1}}$. 
\end{lem}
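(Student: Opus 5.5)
The plan is to show that a 4-cycle decomposition of $G$ is essentially determined by its faces in \emph{one} family of parallel planes, and then to count those faces plane by plane using Lemma~\ref{Lemma:All-4-cycle-2d-configurations}. Each 4-cycle of $G$ is a face lying in exactly one coordinate plane, of type $xy$, $xz$ or $yz$. So a decomposition $D$ splits as $D=D_{xy}\cup D_{xz}\cup D_{yz}$.

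\textbf{Step 1 (each plane is a dense Moore packing).} Fix one $xy$-plane. By Lemma~\ref{Lemma3}, every vertex of that plane lies in exactly one $xy$-face of $D$. These faces lie in the plane and pairwise share no vertex, so the restriction of $D_{xy}$ to the plane is a dense hard-core packing with Moore neighbourhood. By Lemma~\ref{Lemma:All-4-cycle-2d-configurations}, there are at most $4(2^{L_{1}/2}+2^{L_{2}/2})$ such configurations. The $L_{3}$ parallel planes are disjoint, so the number of possible sets $D_{xy}$ is at most $[4(2^{L_{1}/2}+2^{L_{2}/2})]^{L_{3}}$. The same reasoning applies to the $xz$- and $yz$-families.

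\textbf{Step 2 (reconstructing $D$ from one family).} I would show that one of the three families already fixes $D$. Suppose $D_{xy}$ is known. At each vertex $v$, Lemma~\ref{Lemma3} says the $xy$-face at $v$ uses exactly one of the two $x$-edges and one of the two $y$-edges at $v$. The other $x$-edge must then lie in an $xz$-face, and the other $y$-edge in a $yz$-face. Since every $xz$-plane again carries a dense Moore packing (Step 1), the set of $x$-edges covered by $xz$-faces in a given $xz$-plane is known. The remaining question is how these edges pair up along $z$ into faces.

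I would resolve this pairing by propagation, using that within a plane of Lemma~\ref{Lemma:All-4-cycle-2d-configurations} faces are arranged in columns with period two. Once the face through one edge of a stacked column is fixed, the whole column is fixed. Because $L_{3}$ is even, the only possible ambiguity is a global parity shift of a column. Such a shift must then be excluded, or charged to another family, using the $yz$-faces at the same vertices. Each vertex needs exactly one $yz$-face, and the $z$-edges at $v$ must be split consistently between the $xz$- and $yz$-faces. This step is the main obstacle, and I would attack it first with a local case analysis around a single cube, in the spirit of the proof of Lemma~\ref{Lemma3}.

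\textbf{Step 3 (conclusion).} If the determination goes through in the form ``$D$ is determined by whichever family is non-generic (not a pure shift pattern)'', then $D\mapsto$ (one family) is an injection from the set of decompositions into the union of the three families of plane configurations. This gives the sum of the three terms claimed in Lemma~\ref{4cd_upper_bound}.

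If only a weaker statement can be proved, namely that $D$ is determined by $D_{xy}$ up to at most $2^{O(L_{1}L_{2})}$ column parities, the resulting bound is $[4(2^{L_{1}/2}+2^{L_{2}/2})]^{L_{3}}\,2^{O(L_{1}L_{2})}$. Since $L_{1}\le L_{2}\le L_{3}$, this is still $C\exp(c L_{2}L_{3})$, which is what Theorem~\ref{4cd_lower_upper} needs. So this would be the fallback if the exact injection fails.
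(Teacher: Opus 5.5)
Your Step 1 is the paper's argument: each plane carries a dense Moore packing, Lemma~\ref{Lemma:All-4-cycle-2d-configurations} bounds the configurations per plane, and parallel planes are counted independently. The paper stops there and combines the three directions by an independence assumption, which on its face gives the \emph{product} of the three terms rather than their sum. So the sum would have to come from your Step 2, and that step fails: a decomposition is in general not determined by any one of its three families.

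Take the tower decomposition $T$ of Lemma~\ref{Lemma1}. Rotating one $z$-column as in Lemma~\ref{4cd_lower_bound} changes only $xz$- and $yz$-faces, so it gives a different decomposition with the same $D_{xy}$. Likewise, rotating a $y$-column preserves $D_{xz}$, and rotating an $x$-column preserves $D_{yz}$.

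Your rule ``use the non-generic family'' does not rescue this. By Lemma~\ref{Lemma:All-4-cycle-2d-configurations} every admissible plane configuration is a shift pattern, so ``not a pure shift pattern'' is vacuous. If you mean ``non-regular'', take $T$ with one $z$-column rotated. Its $D_{xz}$ and $D_{yz}$ are non-regular. Now also rotate a $y$-column (resp.\ an $x$-column) that shares no face with the rotated $z$-column; such columns exist once $L_1\ge 6$. This gives a different decomposition with the same $D_{xz}$ (resp.\ $D_{yz}$).

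The ``global parity shift of a column'' that you hope to exclude in Step 2 is exactly the rotation of Lemma~\ref{4cd_lower_bound}. It is a legitimate decomposition and cannot be excluded; over the regular $D_{xy}$ of $T$ alone there are $2^{L_1L_2/4}$ completions. Hence the injection of Step 3 does not exist in the form you propose, and the sum bound of Lemma~\ref{4cd_upper_bound} is not proved.

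Your fallback is the right kind of statement, and it can be made rigorous without any local case analysis. By Lemma~\ref{Lemma3}, at every vertex one $z$-edge lies in the $xz$-face and the other in the $yz$-face, so along each $z$-line the edges alternate between the two types. Given $D_{xy}$, you know which $x$-edges belong to $xz$-faces and which $y$-edges belong to $yz$-faces. Adding one phase bit per $z$-line then fixes every remaining face, since each face at a vertex is spanned by its two known edges there. This yields $\Omega_4(G)\le[4(2^{L_1/2}+2^{L_2/2})]^{L_3}\,2^{L_1L_2}$.

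Even more directly, $D=D_{xy}\cup D_{xz}\cup D_{yz}$ gives the product of the three per-family counts from Step 1. Both bounds are of the form $C\exp(cL_2L_3)$ and suffice for Theorem~\ref{4cd_lower_upper}. Neither is the inequality asserted in the lemma, so as it stands your proposal proves a weaker statement, not the lemma itself.
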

Lemmas \ref{4cd_lower_bound} and \ref{4cd_upper_bound} together
yield Theorem \ref{4cd_lower_upper}.

Let us make a short remark on Lemma \ref{Lemma:All-4-cycle-2d-configurations}.
The dense packing of 4-cycles on a square lattice with Moore neighborhood
can be mapped onto an Ising model on the square lattice with nearest
and next-nearest neighbor interactions and external magnetic field.
The possible configuration of 4-cycles are called row-shifted $2\times2$
states in that case, see e.g. \cite{yin_phase_2009}. From a physics
point of view, the fact that the entropy in the two dimensional case
is proportional to the square root of the number of lattice sites
and to $|E(G)|^{2/3}$ in our case of 4-cycle decompositions on $G$
is related to the frustration of the Ising model due to the competing
nearest and next-nearest neighbor interactions.

\section{Proof of Theorem \ref{gs_Nc}}

As mentioned in the remark to Theorem \ref{gs_Nc}, the upper bound
for the number of different 4-cycle decompositions of $G$ is also
an upper bound for the number of ground states of the bosonic Hubbard
model on $L(G)$ with $N_{c}=|E(G)|/4$ particles. To prove Theorem
\ref{gs_Nc}, we need to prove the lower bound. As a first step we
show 
\begin{lem}
\label{lemma:linear-independence-of}The two possible rotations of
a column of even length $L_{1}$ with periodic boundary as depicted
in Fig. \ref{fig:The-two-rotations} correspond to two linearly independent
multi-particle ground states.
\end{lem}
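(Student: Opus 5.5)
The plan is to compare the two multi-particle states in the occupation-number basis of $L(G)$ and to exhibit one basis configuration that occurs in one state but not in the other. Let $\mathcal{A}$ and $\mathcal{B}$ be the two 4-cycle decompositions of $G$ that differ only by the rotation of one column, as in Fig. \ref{fig:The-two-rotations}. Let $\mathcal{A}_{0}\subset\mathcal{A}$ and $\mathcal{B}_{0}\subset\mathcal{B}$ be the faces of the column in the two rotations, and let $\mathcal{R}=\mathcal{A}\setminus\mathcal{A}_{0}=\mathcal{B}\setminus\mathcal{B}_{0}$ be the common remainder.

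Both $\mathcal{A}_{0}$ and $\mathcal{B}_{0}$ cover exactly the same set $E_{0}$ of edges of the column, each edge exactly once. This holds because $\mathcal{A}$ and $\mathcal{B}$ are both decompositions and agree outside the column. Writing $s_{c}^{\dagger}=\sum_{e}s_{e,c}b_{e}^{\dagger}$, the two ground states are
\[
\Psi_{\mathcal{A}}=\Bigl(\prod_{c\in\mathcal{R}}s_{c}^{\dagger}\Bigr)\Bigl(\prod_{c\in\mathcal{A}_{0}}s_{c}^{\dagger}\Bigr)|0\rangle,\qquad\Psi_{\mathcal{B}}=\Bigl(\prod_{c\in\mathcal{R}}s_{c}^{\dagger}\Bigr)\Bigl(\prod_{c\in\mathcal{B}_{0}}s_{c}^{\dagger}\Bigr)|0\rangle .
\]
The faces in $\mathcal{R}$ use only edges outside $E_{0}$. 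The Fock space therefore factorizes as a tensor product over $E_{0}$ and its complement, and the common factor is a nonzero vector. It thus suffices to show that $\phi_{\mathcal{A}}=\prod_{c\in\mathcal{A}_{0}}s_{c}^{\dagger}|0\rangle$ and $\phi_{\mathcal{B}}=\prod_{c\in\mathcal{B}_{0}}s_{c}^{\dagger}|0\rangle$ are linearly independent.

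Because the faces in $\mathcal{A}_{0}$ are edge-disjoint, expanding the product gives
\[
\phi_{\mathcal{A}}=\sum\pm b_{e_{1}}^{\dagger}\cdots b_{e_{n}}^{\dagger}|0\rangle ,
\]
summed over all transversals of $\mathcal{A}_{0}$, i.e. all choices of one edge $e_{i}$ from each face of $\mathcal{A}_{0}$. Distinct transversals give distinct, orthogonal, hard-core occupation states, so every transversal of $\mathcal{A}_{0}$ appears in $\phi_{\mathcal{A}}$ with coefficient $\pm1$. The analogous statement holds for $\mathcal{B}_{0}$. Moreover, an occupation configuration appears in $\phi_{\mathcal{B}}$ only if it is a transversal of $\mathcal{B}_{0}$.

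The key step is to find a transversal of $\mathcal{A}_{0}$ that is not a transversal of $\mathcal{B}_{0}$. Take a face $f\in\mathcal{B}_{0}$ with $f\notin\mathcal{A}_{0}$; it exists because the two rotations are distinct decompositions. The four edges of $f$ are each covered by some face of $\mathcal{A}_{0}$, and they cannot all lie in one face of $\mathcal{A}_{0}$, since then that face would equal $f$. Hence there are two edges $e\neq e'$ of $f$ lying in two different faces $g\neq g'$ of $\mathcal{A}_{0}$. Build a transversal of $\mathcal{A}_{0}$ by choosing $e$ from $g$, $e'$ from $g'$, and an arbitrary edge from every other face of $\mathcal{A}_{0}$. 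This configuration has two particles on edges of the single face $f\in\mathcal{B}_{0}$, so it is not a transversal of $\mathcal{B}_{0}$. Its coefficient is $\pm1$ in $\phi_{\mathcal{A}}$ and $0$ in $\phi_{\mathcal{B}}$.

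Linear independence follows directly. Both vectors are nonzero, and if $\phi_{\mathcal{A}}=\lambda\phi_{\mathcal{B}}$, the coefficient of this configuration would give $\pm1=\lambda\cdot0$, a contradiction.

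The only point I expect to need care is the bookkeeping that the two rotations really cover the same edge set $E_{0}$ and leave the rest of the decomposition unchanged. This is what allows the common factor to be split off. I would check it directly from the labelling $a_{i},b_{i},c_{i},d_{i}$ in Fig. \ref{fig:The-two-rotations}, using that $L_{1}$ is even so that the alternating pattern closes up under the periodic boundary. Everything else is the transversal argument, which is insensitive to the signs $s_{e,c}$.
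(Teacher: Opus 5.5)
Your proof is correct, but it takes a different route from the paper's.

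The paper computes the Gram matrix of the two column states with Wick's theorem. Both norms equal $4^{2L_1}$, because each face contracts only with itself. The overlap $\langle\downarrow|\uparrow\rangle$ counts full contractions between adjacent faces of the two rotations, i.e. perfect matchings, and this count is argued to be strictly below $4^{2L_1}$. Hence $\det M>0$: Cauchy--Schwarz is strict.

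You instead compare supports in the occupation-number basis. Edge-disjointness gives exactly the transversals with coefficients $\pm1$. A face $f\in\mathcal{B}_0\setminus\mathcal{A}_0$ then produces a transversal of $\mathcal{A}_0$ carrying two particles on $f$, so it lies outside the support of $\phi_{\mathcal{B}}$.

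Your argument has several advantages:
\begin{itemize}
\item It is shorter and purely combinatorial.
\item It does not depend on the orientation conventions for $s_c$. The paper asserts that every off-diagonal contraction equals $-1$, which depends on how the faces are oriented, although an absolute-value bound would rescue it.
\item It avoids the paper's rather informal justification that the matching count is strictly smaller than $4^{2L_1}$.
\item It applies verbatim to any two distinct edge-disjoint cycle decompositions of the same edge set.
\end{itemize}
You could even drop the tensor-factorization step and run the same support argument directly on $\Psi_{\mathcal{A}}$ and $\Psi_{\mathcal{B}}$.

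In principle the paper's approach gives quantitative control of the overlap. That information is never used, though. Lemma \ref{mpg_lower_bound} needs only pairwise independence within each column, together with the tensor-product structure across edge-disjoint columns, and your result provides this equally well.
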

\begin{proof}
Consider the labeling convention as in Fig. \ref{fig:The-two-rotations}
and abbreviate the creation operators on a 4-cycle $f$ as $b_{f}^{\dagger}$.
The multi-particle ground states of the two rotations are given as
\begin{equation}
|\uparrow\rangle:=\prod_{i=1,{\rm odd}}^{L_{1}}(a_{i}^{\dagger}c_{i}^{\dagger})\prod_{i=2,{\rm even}}^{L_{1}}(b_{i}^{\dagger}d_{i}^{\dagger})|0\rangle,\quad|\downarrow\rangle:=\prod_{i=1,{\rm odd}}^{L_{1}}(b_{i}^{\dagger}d_{i}^{\dagger})\prod_{i=2,{\rm even}}^{L_{1}}(a_{i}^{\dagger}c_{i}^{\dagger})|0\rangle.\label{eq:towerrot}
\end{equation}
 Consider the Gram matrix 
\begin{equation}
M=\begin{pmatrix}\langle\uparrow|\uparrow\rangle & \langle\uparrow|\downarrow\rangle\\
\langle\downarrow|\uparrow\rangle & \langle\downarrow|\downarrow\rangle
\end{pmatrix}.
\end{equation}
We will show, that $\det M\neq0$ and thus $|\uparrow\rangle$ and
$|\downarrow\rangle$ are linearly independent. By \emph{Wick's theorem}
the scalar products in $M$ are each the sum of the full Wick contractions
of creation operators with annihilation operators. Since a contraction
vanishes for faces that have no edge in common, we need not consider
such full contractions. The only relevant contractions are thus 
\begin{equation}
[a_{i},b_{i}^{\dagger}],[b_{i},c_{i}^{\dagger}],[c_{i},d_{i}^{\dagger}],[a_{i},d_{i}^{\dagger}],\quad[f_{i},f_{i\oplus1}^{\dagger}]=-1,\qquad[f_{i},f_{i}^{\dagger}]=4
\end{equation}
along with their swapped variants (e.g. $[b_{i},a_{i}^{\dagger}]$),
where $1\leqslant i\leqslant L_{z}$ and $f_{i}\in\{a_{i},b_{i},c_{i},d_{i}\}$.
Here we write $i\oplus j:=(i+j-1\mod L_{z})+1$, for example $L_{z}\oplus1=1$.

With this we get 
\begin{align}
\langle\downarrow|\downarrow & \rangle=\langle0|\prod_{i=2,{\rm even}}^{L_{1}}(a_{i}c_{i})\prod_{i=1,{\rm odd}}^{L_{1}}(b_{i}d_{i})\prod_{i=2,{\rm even}}^{L_{1}}(a_{i}^{\dagger}c_{i}^{\dagger})\prod_{i=1,{\rm odd}}^{L_{1}}(b_{i}^{\dagger}d_{i}^{\dagger})|0\rangle\\
 & =\prod_{i=2,{\rm even}}^{L_{1}}[a_{i},a_{i}^{\dagger}][c_{i},c_{i}^{\dagger}]\prod_{i=1,{\rm odd}}^{L_{1}}[b_{i},b_{i}^{\dagger}][d_{i},d_{i}^{\dagger}]=4^{2L_{1}}\\
\langle\uparrow|\uparrow\rangle & =\langle0|\prod_{i=1,{\rm odd}}^{L_{1}}(a_{i}c_{i})\prod_{i=2,{\rm even}}^{L_{1}}(b_{i}d_{i})\prod_{i=1,{\rm odd}}^{L_{1}}(a_{i}^{\dagger}c_{i}^{\dagger})\prod_{i=2,{\rm even}}^{L_{1}}(b_{i}^{\dagger}d_{i}^{\dagger})|0\rangle\\
 & =\prod_{i=1,{\rm odd}}^{L_{1}}[a_{i},a_{i}^{\dagger}][c_{i},c_{i}^{\dagger}]\prod_{i=2,{\rm even}}^{L_{1}}[b_{i},b_{i}^{\dagger}][d_{i},d_{i}^{\dagger}]=4^{2L_{z}}
\end{align}
since for every annihilation operator $f_{i}$ in the product, the
only non-vanishing contraction would be with its conjugate. The mixed
products are 
\begin{align}
\langle\downarrow|\uparrow\rangle & =\langle\uparrow|\downarrow\rangle=\langle0|\prod_{i=2,{\rm even}}^{L_{1}}(a_{i}c_{i})\prod_{i=1,{\rm odd}}^{L_{1}}(b_{i}d_{i})\prod_{i=1,{\rm odd}}^{L_{1}}(a_{i}^{\dagger}c_{i}^{\dagger})\prod_{i=2,{\rm even}}^{L_{1}}(b_{i}^{\dagger}d_{i}^{\dagger})|0\rangle
\end{align}
Since every possible contraction is either zero or $-1$, since the
product never contains a creation along with the corresponding annihilation
operator, all non-vanishing full contractions are $(-1)^{2L_{1}}=1$.
We thus only need to count the non-vanishing full contractions. A
trivial strict upper bound is $4^{2L_{1}}$ since every of the $2L_{1}$-many
annihilation operators on the left has at most $4$ creation operators
on the right corresponding to adjacent faces. This upper bound is
strict, i.e. $\langle\uparrow|\downarrow\rangle<4^{2L_{1}}$, since 
\begin{itemize}
\item for $L_{1}=2$ there are only $3$ possible adjacent faces for each
face, yielding a maximum of $3^{2L_{1}}$ such full contractions. 
\item for $L_{1}>2$ at least some of the $4^{2L_{1}}$-many sets of pairs
of annihilation and creation operators must contain one creation/annihilation
operator twice, which thus do not correspond to a valid non-vanishing
full-contraction. A set of such contractions for $L_{1}=4$ is 
\begin{equation}
\langle\downarrow|\uparrow\rangle=\langle0|\wick{a_{2}\c2c_{2}\c1a_{4}c_{4}b_{1}\c4d_{1}\c3b_{3}d_{3}{\c1a_{1}^{\dagger}}c_{2}^{\dagger}{\c3a_{3}^{\dagger}}c_{3}^{\dagger}{\c2b_{2}^{\dagger}}{\c4d_{2}^{\dagger}}b_{4}^{\dagger}d_{4}^{\dagger}}|0\rangle.
\end{equation}
Notice how every contraction is between adjacent faces and the creation
operators correspond exactly to the faces adjacent to $a_{2}$. Thus
there is no possible non-vanishing contraction for $a_{2}$ in this
case. This example still works for $L_{1}>4$ since the additional
operators do not change the argument. 
\end{itemize}
So we get $\det M=\langle\uparrow|\uparrow\rangle\langle\downarrow|\downarrow\rangle-|\langle\uparrow|\downarrow\rangle|^{2}>4^{4L_{1}}-4^{4L_{1}}=0$. 
\end{proof}
\begin{lem}
\label{mpg_lower_bound}The number of different multi-particle ground
states of the bosonic Hubbard model \eqref{eq:Hubbard} on $G$ can
be bounded from below by $2^{L_{2}L_{3}/4}$.
\end{lem}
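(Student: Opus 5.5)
The plan is to show that the $2^{L_{2}L_{3}/4}$ tower-rotation decompositions from Lemma \ref{4cd_lower_bound} give linearly independent multi-particle ground states. We take the columns running in the direction of length $L_{1}$. There are $L_{2}L_{3}/4$ of them, and each can be put independently in one of the two rotations $|\uparrow\rangle$, $|\downarrow\rangle$ of Lemma \ref{lemma:linear-independence-of}. The faces that do not belong to any rotatable column form a fixed background $F$. A configuration $\sigma\in\{\uparrow,\downarrow\}^{L_{2}L_{3}/4}$ then corresponds to the state
\begin{equation}
|\sigma\rangle=\prod_{f\in F}b_{f}^{\dagger}\prod_{k}P_{k}^{\sigma_{k}}|0\rangle ,
\end{equation}
where $P_{k}^{\uparrow}$ and $P_{k}^{\downarrow}$ are the products of face creation operators in (\ref{eq:towerrot}) for column $k$. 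The goal is to show that the Gram matrix $M_{\sigma\sigma'}=\langle\sigma|\sigma'\rangle$ is nonsingular. Its size is $2^{L_{2}L_{3}/4}$, and nonsingularity gives the lemma.

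The key structural step is to reduce the Gram matrix to a tensor product. Two columns in the same layer are separated by faces of $F$, so they share no edge. Columns in different layers are likewise disjoint in edges. Hence a Wick contraction between a face operator of column $k$ and a creation operator of column $k'\neq k$ vanishes. The background faces do share edges with column faces, so contractions between $F$ and the columns are the real danger. I would try to eliminate them as follows. Every face of $F$ appears in both bra and ket, and I would argue that in every full contraction each background annihilator must be paired with its own creator. The bra and the ket have the same number of operators, and the column faces sharing edges with a background face are themselves covered in both states. A counting or ordering argument along the tower direction should then force the self-pairing. If that holds, $M=4^{|F|}\bigotimes_{k}M^{(k)}$, where $M^{(k)}$ is the $2\times2$ matrix of Lemma \ref{lemma:linear-independence-of}, and $\det M=\prod\det M^{(k)}\neq0$ with the appropriate multiplicities.

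The main obstacle is this decoupling step. Bosonic Wick sums do not factor automatically, because mixed contractions contribute $-1$ terms that could couple the columns to the background. If the clean factorization fails, my first fallback is a triangularity argument in place of exact factorization. Order the configurations by the number of $\uparrow$'s and expand $\langle\sigma|\sigma'\rangle$. Then show that the diagonal term (all self-contractions, weight $4^{N_{c}}$) strictly dominates the sum of the absolute values of the off-diagonal terms in each row, making $M$ diagonally dominant. The diagonal entry is $4^{N_{c}}$, and the bound from Lemma \ref{lemma:linear-independence-of} suggests that each mismatched column costs at least a factor strictly less than one relative to the diagonal. The row sum then factorizes as $\prod_{k}(1+r_{k})$ with $r_{k}<1$, but this product need not be below $2$. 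So diagonal dominance itself may fail, and the true route is the tensor factorization, supported by positivity of each $M^{(k)}$.

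A second fallback is to use linear independence in the form of annihilation: construct operators that detect individual column rotations, for example by applying $b_{e}$ on edges where the $\uparrow$ and $\downarrow$ configurations differ. Even if the whole lemma proves delicate, an exponential lower bound $\exp(cL_{2}L_{3})$ suffices for Theorem \ref{gs_Nc}. If needed, I would restrict to a sublattice of columns spaced far apart, with every other layer frozen, to guarantee the decoupling at the cost of a smaller constant.
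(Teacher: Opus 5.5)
Your overall strategy is the paper's argument: write each configuration as a product over independently rotatable columns, and reduce the Gram matrix to $4^{|F|}\bigotimes_k M^{(k)}$ so that Lemma \ref{lemma:linear-independence-of} applies column by column. The gap is that you never establish this factorization. You leave it as ``a counting or ordering argument \ldots\ should then force the self-pairing'' and ``if that holds''. Moreover, the reason you give for why it is delicate is false: the background faces do \emph{not} share edges with the column faces.

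The missing observation is short:
\begin{itemize}
\item Both rotations in (\ref{eq:towerrot}) cover exactly the same edge set $E_k$. This is the $8L_1$ edges of the $1\times1$ tube: the $4L_1$ edges parallel to its axis, plus the four edges of each of the $L_1$ cross-sectional squares.
\item $F$ together with either rotation is a 4-cycle decomposition. Hence every $f\in F$ is edge-disjoint from $E_k$, and $[b_f,b_g^\dagger]=\sum_e s_{e,f}s_{e,g}=0$ for every face $g$ of either rotation of every column. Background faces that meet a column only at a vertex contribute nothing.
\item Distinct faces of $F$ are also mutually edge-disjoint. So the contraction matrix is block diagonal with square blocks, and the Wick sum (a permanent) factorizes: $\langle\sigma|\sigma'\rangle=4^{|F|}\prod_k\langle\sigma_k|\sigma'_k\rangle_k$.
\end{itemize}
Equivalently, the Fock space splits as $\mathcal{F}(E(F))\otimes\bigotimes_k\mathcal{F}(E_k)$ and each $|\sigma\rangle$ is a product vector. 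Linear independence of $\{|\uparrow\rangle_k,|\downarrow\rangle_k\}$ for each $k$ then gives linear independence of all $2^{L_2L_3/4}$ products. This is what the paper's terse remark that the columns ``have no edge in common'' is using.

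The self-pairing argument you sketch would have no force if the cross contractions were really nonzero. Equal operator counts in bra and ket do not stop a background annihilator from contracting with a neighbouring column creator. It is the identical vanishing of those contractions that does the work, not any counting.

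Once this is seen, your fallbacks are unnecessary. They also would not deliver the lemma as stated. As you note yourself, diagonal dominance can fail. The sparse-sublattice variant gives only some $\exp(cL_2L_3)$, which would suffice for Theorem \ref{gs_Nc} but not for the bound $2^{L_2L_3/4}$ claimed here.
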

\begin{proof}
As in the proof of Lemma \ref{4cd_lower_bound}, consider the tower
decomposition in Lemma \ref{Lemma1}. It can be described as towers
with faces parallel to the $yz$-plane and between these columns with
faces perpendicular to the $yz$-plane. If one rotates one of these
columns by $\pi/2$ as depicted in Fig. \ref{fig:The-two-rotations},
one obtains another 4-cycle decomposition. Since all columns can be
rotated independently, one obtains $2^{L_{2}L_{3}/4}$ possible 4-cycle
decompositions. Each of these 4-cycle decomposition yields a bosonic
multi-particle state as described in subsection \ref{subsec:The-model}.
The different columns which we rotate or not have no edge common.
Since as shown in Lemma \ref{lemma:linear-independence-of} two rotated
states in the same column are linearly independent, so are all of
the states with rotated columns as long as all columns have the same
direction, here perpendicular to the $yz$-plane. This yields the
desired lower bound. 
\end{proof}
This together with the upper bound from Theorem \ref{4cd_lower_upper}
proves Theorem \ref{gs_Nc}.

\printbibliography

\end{document}